\documentclass[anon]{l4dc2026}

\title[Koopman-BoxQP]{Koopman-BoxQP: Solving Large-Scale NMPC at kHz Rates}

\usepackage{times}
\usepackage{amsmath}
\usepackage{scalerel}
\usepackage{tikz}
\usepackage{algorithm}
\usepackage{enumitem}
\usepackage{multirow}
\usepackage{booktabs}
\usepackage{wrapfig}
\usepackage{subcaption} 
 \usepackage{url}
\usepackage{enumitem}
 \usepackage{todonotes} 

\newcommand*{\pdot}{\mathbin{\scalerel*{\boldsymbol\odot}{\circ}}}

\newcommand{\rr}{{\mathbb R}}

\author{%
 \Name{Liang Wu$^1$} \Email{wliang14@jh.edu}\\
 \Name{Wallace Gian Yion Tan$^2$} \Email{wtgy@mit.edu}\\
 \Name{Richard D. Braatz$^2$} \Email{braatz@mit.edu}\\
 \Name{J\'an Drgo\v na$^1$} \Email{jdrgona1 @jh.edu}\\
 \addr $^1$Johns Hopkins University, MD 21218, USA\\
 \addr $^2$Massachusetts Institute of Technology, MA 02139, USA.
}

\begin{document}

\maketitle

\begin{abstract}
Solving large-scale nonlinear model predictive control (NMPC) problems at kilohertz (kHz) rates on standard processors remains a formidable challenge. This paper proposes a \textit{Koopman-BoxQP} framework that \textit{i)} learns a linear Koopman high-dimensional model, \textit{ii)} eliminates the high-dimensional observables to construct a multi-step prediction model of the states and control inputs, \textit{iii)} penalizes the multi-step prediction model into the objective, which results in a structured box-constrained quadratic program (BoxQP) whose decision variables include both the system states and control inputs, \textit{iv)} develops a \textit{structure-exploited} and \textit{warm-starting-supported} variant of the \textit{feasible} Mehrotra's interior-point algorithm for BoxQP. Numerical results demonstrate that \textit{Koopman-BoxQP} can solve a large-scale NMPC problem with $1040$ variables and $2080$ inequalities at a kHz rate.
\end{abstract}

\begin{keywords}%
  Data-driven Koopman Operator, Nonlinear MPC, Quadratic Programming
\end{keywords}

\section{Introduction}
Modern quadratic programming (QP) solvers  \citep{domahidi2012efficient,ferreau2014qpoases,frison2020hpipm,zanelli2020forces,wu2023simple} are able to solve small-scale linear MPC problems ($<$50 variables and $<$100 inequalities) at kHz rates (milliseconds) on standard processors, but fail to achieve similar performance for medium- or large-scale linear MPC problems (around $1000+$ variables and $2000+$ inequalities), let alone nonlinear programming-based nonlinear MPC problems (NMPC) \citep{diehl2009efficient}. Recently, data-driven Koopman approaches, which learn linear dynamics in a high-dimensional observable space, have emerged as a computationally efficient way to transform NMPC problems into compact general QP formulations by eliminating the high-dimensional observables \citep{korda2018linear,Folkestad2020}. 

While existing Koopman-MPC frameworks use the \textit{condensed} MPC-to-QP construction \citep{korda2018linear} that only involves the control inputs (resulting in fewer decision variables), the number of state inequality constraints can become very large when the state dimension is high, as is commonly the case in control of partial differential equations (PDEs)  \citep{Hassan2018}. As the computation times of QP solvers typically depend on the \textit{total number of variables and inequalities}, solving large-scale Koopman MPC problems at kHz rates using general-purpose QP solvers without exploiting problem structure remains computationally infeasible. Moreover, for practical use, soft-constrained MPC formulations \citep{zeilinger2014soft} that introduce slack variables for state constraints to guarantee feasibility (at the expense of higher computational cost due to the increased number of variables) are commonly employed to handle unknown disturbances and modeling errors. Modeling errors are inevitable, as data-driven finite-dimensional Koopman frameworks always involve projection and sampling errors \citep{zhang2022robust,strasser2025kernel}.

This article proposes a dynamics-relaxed alternative for Koopman-MPC problems, which penalizes the approximated Koopman dynamic model via an $\ell_2$-penalty term in the objective, rather than embedding it into the inequality constraints as in \textit{condensed} Koopman's MPC-to-QP construction \citep{korda2018linear}. The proposed dynamics-relaxed Koopman-MPC approach can be regarded as a \textit{sparse} Koopman's MPC-to-QP construction, as the states are retained, while eliminating the high-dimensional observables in the offline phase. Unlike the \textit{condensed} formulation in \cite{korda2018linear}, which yields a general QP, the proposed dynamics-relaxed Koopman-MPC formulation results in a structured box-constrained QP (BoxQP). To tackle this problem efficiently, this article develops a \textit{structure-exploited} and \textit{warm-starting-supported} (note that the warm-starting does not work in IPMs for general QPs) Mehrotra's predictor–corrector interior point method (IPM) \citep{mehrotra1992implementation}, which is capable of solving the resulting large-scale BoxQP at kHz rates. The overall framework is referred to as the \textit{Koopman-BoxQP} solution for NMPC problems.

\subsection{Contributions}
This article makes the following contributions, which are also the key factors as to how the proposed structure-exploited \textit{Koopman-BoxQP} achieves kHz-rate performance:
\begin{enumerate}[itemsep=0pt,topsep=2pt,leftmargin=12pt]
    \item[1)] \textit{strictly feasible} cold- and warm-start initializations are provided cost-free, enabling direct use of the \textit{feasible} Mehrotra’s predictor–corrector IPM algorithm;
    \item[2)] the \textit{feasible} Mehrotra IPM typically converges to high-accuracy solutions of BoxQPs around $\mathbf{10}$ iterations (under strictly feasible cold-starts), regardless of the problem dimension, which is the key numerical finding highlighted in this article. Moreover, warm-starting performs well in the \textit{feasible} Mehrotra IPM for solving BoxQPs and in real-time MPC problems; warm-starting from previous solutions halves the iterations to about $\mathbf{6}$;
    \item[3)] by exploiting structure, the dimension of the linear system solved in each IPM step is reduced from $5(n_u+n_x)N_p$ to $n_u N_p$, yielding substantial speedups (when $n_x \gg n_u$, as in PDE control).
\end{enumerate}

\section{Problem Formulation}
\begin{wrapfigure}[12]{r}{0.53\textwidth}  
\vspace{-10pt}
\rule{\linewidth}{0.6pt}
\textbf{Nonlinear MPC:}
\begin{subequations}\label{eqn_nmpc}
\begin{align}
\min & \sum_{k=1}^{N-1} 
\|x_{k+1}-x_r\|_{W_x}^2 + \|u_{k}-u_r\|_{W_{u}}^2 \label{subeq:nmpc-cost}\\
& \qquad+ \|u_k-u_{k-1}\|_{W_{\Delta u}}^2 \nonumber\\
\text{s.t. }&x_0 = x(t), \; u_{-1}=0, \\[-2pt]
& x_{k+1} = f(x_k,u_k),\quad k=0,\dots,N-1 \label{subeq:nmpc-dyn} \\[-2pt]
&  -\mathbf{1} \leq x_{k+1}  \leq \mathbf{1},\quad k=0,\dots,N-1 \label{subeq:mpc-state-constraints} \\[-2pt]
& -\mathbf{1}\leq u_{k} \leq \mathbf{1},\quad k=0,\dots,N-1
\end{align}
\end{subequations}
\rule{\linewidth}{0.6pt}
\vspace{-2pt}
\end{wrapfigure}
 This article considers a nonlinear MPC problem (NMPC) for tracking, as shown in \eqref{eqn_nmpc}, where $x(t)$ is the feedback state at the sampling time $t$,  $u_k\in\rr^{n_u}$ and $x_k\in\rr^{n_x}$ denote the control input and state at the $k$th time step, the prediction horizon length is $N$, $x_r$ and $u_r$ denote the desired tracking reference signal for the state and control input, respectively, and $W_x\succ0$, $W_u\succ0$, and $W_{\Delta u}\succ0$ are weighting matrices for the deviation of the state tracking error, control input tracking error, and control input increment, respectively. Assume that $W_x$, $W_u$, and $W_{\Delta u}$ are diagonal. The equality constraint \eqref{subeq:nmpc-dyn} represents the nonlinear discrete-time dynamical system. Without loss of generality, the state and control input constraints, $[x_{\min},x_{\max}]$ and $[u_{\min},u_{\max}]$, are scaled to $[-\mathbf{1},\mathbf{1}]$.
 
NMPC \eqref{eqn_nmpc} is a nonconvex nonlinear program, which can be infeasible, can be sensitive to the noise-contaminated $x(t)$, can have slow solving speed, and can lack of convergence guarantees. A Koopman framework for MPC \citep{korda2018linear}, which transforms NMPC \eqref{eqn_nmpc} into a convex QP problem via data-driven Koopman approximations, allows the use of computationally efficient and convergent QP algorithms for real-time applications.

\section{Preliminary: Koopman transforms NMPC into general QP}
The Koopman operator \citep{koopman1931hamiltonian,koopman1932dynamical} provides a globally linear representation of nonlinear dynamics, enabling efficient linear techniques for nonlinear systems. Originally developed for autonomous systems, Koopman operator theory has since been extended to controlled systems of the form \eqref{subeq:nmpc-dyn} through various schemes \citep{williams2016extending,proctor2018generalizing,korda2018linear}. In practice, the infinite-dimensional Koopman operator is truncated and approximated using data-driven Extended Dynamic Mode Decomposition (EDMD) methods \citep{williams2015data,williams2016extending,korda2018convergence,korda2018linear}. In EDMD specifically, the set of extended observables is designed as the ``lifted" mapping, $\left[\begin{array}{@{}c@{}}
        \psi(x) \\
        \mathbf{u}(0)
    \end{array}\right]\!,$
where $\mathbf{u}(0)$ denotes the first component of the sequence $\mathbf{u}$ and $\psi(x)\triangleq\left[\psi_1(x), \cdots{}, \psi_{n_\psi}(x)\right]^\top$ ($n_\psi \gg n_x$) is chosen from a basis function, e.g., Radial Basis Functions (RBFs) used in \cite{korda2018linear}, instead of directly solving for them via optimization. In particular, the approximate Koopman operator identification problem is reduced to a least-squares problem, which assumes that the sampled data $\{(
x_j,\mathbf{u}_j),(x_j^+,\mathbf{u}_j^+)\}$ (where $j$ denotes the index of data samples and the superscript $+$ denotes the value at the next time step) are collected with the update mapping $\left[\begin{array}{@{}c@{}}
    x_j^+ \\
    \mathbf{u}_j^+
\end{array}\right]\! =\! \left[\begin{array}{@{}c@{}}
     f(x_j,\mathbf{u}_j(0))  \\
     \boldsymbol{S}\mathbf{u}_j
\end{array}\right]$. Then an approximation of the Koopman operator, $\mathcal{A}\triangleq [A \;\;B]$, can be obtained by solving
\begin{equation}
  J(A,B) = \min_{A,B} \sum_{j=1}^{N_d}\|\psi(x_j^+)-A\psi(x_j)-B\mathbf{u}_j(0)\|_2^2.  
\end{equation}
According to \cite{korda2018linear}, if the designed lifted mapping $\psi(x)$ contains the state $x$ after the re-ordering $\psi(x)\leftarrow [x^\top, \psi(x)]^\top$, then $C=[I,0]$.
The learned linear Koopman predictor model is given as
\begin{equation}\label{eqn_Koopman_linear}
\psi_{k+1} = A \psi_k + Bu_k,~ x_{k+1} = C \psi_{k+1},
\end{equation}
where $\psi_k\triangleq\psi(x_k) \in\rr^{n_\psi}$ denotes the lifted state space and with $\psi_0 = \psi(x(t))$. If~\eqref{eqn_Koopman_linear} has a high lifted dimension (for a good approximation), its use in MPC will not increase the dimension of the resulting QP if the high-dimensional observables $\psi_k$ are eliminated via
\begin{equation}\label{eqn_X_U_E_F}
\footnotesize
\left[\begin{array}{@{}c@{}}
             x_{1}  \\
             x_{2} \\
             \vdots \\
             x_{N}
\end{array}\right] = \mathbf{E} \psi\big(x(t)\big) + \mathbf{F} \!\left[\begin{array}{@{}c@{}}
     u_{0}  \\
     u_{1} \\
     \vdots \\
     u_{N-1}
\end{array}\right]\!,~\text{where } \mathbf{E} \triangleq\! \left[\begin{array}{c}
    CA  \\
    CA^2 \\
    \vdots \\
    CA^N
\end{array}\right]\!,\ \  \mathbf{F} \triangleq \left[\begin{array}{cccc}
     CB & 0 & \cdots & 0  \\
     CAB & CB & \cdots & 0 \\
     \vdots & \vdots & \ddots & \vdots \\
     CA^{N-1}B &  CA^{N-2}B & \cdots & CB
\end{array}\right]\!.    
\end{equation}
Then, by embedding \eqref{eqn_X_U_E_F} into the quadratic objective \eqref{subeq:nmpc-cost} and the state constraint \eqref{subeq:mpc-state-constraints}, NMPC \eqref{eqn_nmpc} can be reduced to a compact \textit{general} QP with the decision vector $z\triangleq \mathrm{col}(u_0,\cdots{},u_{N-1})\in\rr^{N\times n_u}$ as shown in \eqref{eqn_KoopmanMPC_QP}, where $\bar{W}_x\triangleq\mathrm{blkdiag}(W_x,\cdots{},W_x)$, $\bar{W}_u\triangleq\mathrm{blkdiag}(W_u,\cdots{},W_u)$, $\bar{R}\triangleq\mathrm{blkdiag}(\bar{W}_{\Delta u},\cdots{},\bar{W}_{\Delta u},\bar{W}_{\Delta u}^N)$ $\Bigg(\footnotesize\bar{W}_{\Delta u}=\left[\begin{array}{cc}
    2W_{\Delta u} &  -W_{\Delta u} \\
    -W_{\Delta u} &  2W_{\Delta u}
\end{array}\right]$ and $\footnotesize\bar{W}_{\Delta u}^N=\left[\begin{array}{cc}
    2W_{\Delta u} &  -W_{\Delta u} \\
    -W_{\Delta u} &  W_{\Delta u}
\end{array}\right]\Bigg)$ and $h \triangleq \mathbf{F}^\top \bar{W}_x(\mathbf{E}\psi(x(t))-\bar{x}_r)-\bar{W}_u\bar{u}_r$ ($\bar{x}_r=\mathrm{repmat}(x_r)$ and $\bar{u}_r=\mathrm{repmat}(u_r)$).
Note that the computation of the high-dimensional observable $\psi(x(t))$ is performed once and will not be involved in the iterations of QP, which minimizes a side effect of the high-dimensional Koopman operator. 
\begin{wrapfigure}[10]{r}{0.52\textwidth} 
\rule{\linewidth}{0.6pt}
\textbf{NMPC $\rightarrow$ Koopman-QP:}
\begin{subequations}\label{eqn_KoopmanMPC_QP}
\begin{align}
\min_{z} &~ z^\top\!\left(\mathbf{F}^\top \bar{W}_x \mathbf{F}+ \bar{W}_u + \bar{R} \right) \!z + 2z^\top h \label{eqn_KoopmanMPC_QP_obj}\\
\text{s.t.} &~ -\mathbf{1}\leq \mathbf{E}\psi(x_t)+\mathbf{F}z \leq \mathbf{1} \label{eqn__KoopmanMPC_QP_state_constraint} \\
&~ - \mathbf{1} \leq z \leq \mathbf{1}
\end{align}
\end{subequations}  
\rule{\linewidth}{0.6pt}
\end{wrapfigure}

The resulting Koopman-QP \eqref{eqn_KoopmanMPC_QP} has $N n_u$ decision variables and $2N (n_x+n_u)$ inequality constraints, with no dependence on $n_\psi$. The Koopman-QP \eqref{eqn_KoopmanMPC_QP} may be infeasible due to \textit{inevitable} modeling errors in the data-driven Koopman \textit{approximation}. By softening the state constraints \eqref{eqn__KoopmanMPC_QP_state_constraint} to $ -\mathbf{1}- \epsilon V_{\min}\leq \mathbf{E}\psi(x_t)+\mathbf{F}z \leq \mathbf{1} +\epsilon V_{\max}$ (the vectors $V_{\min}$ and $V_{\max}$ are non-positive) and adding a penalty term $\rho_{\epsilon} \epsilon^2$ ($\rho_{\epsilon}\gg W_{x},W_{u},W_{\Delta u}$) to the objective \eqref{eqn_KoopmanMPC_QP_obj}, then the Koopman-QP \eqref{eqn_KoopmanMPC_QP} with the new decision vector $z\leftarrow \mathrm{col}(z,\epsilon)$ is guaranteed to be feasible. 
Moreover, the Koopman-QP \eqref{eqn_KoopmanMPC_QP} constructed via condensing is a general QP without a specific structure, yet no existing work has developed a tailored structure-exploited QP solver for the Koopman-MPC framework to enable faster real-time computation.

\section{Methodology}
To simultaneously handle potential infeasibility and obtain faster execution time, this article proposes a \textit{dynamics-relaxed} construction that transforms the Koopman-MPC problem into a parametric BoxQP, which is not always feasible but is guaranteed to be Lipschitz. The most important reason for choosing the construction is that we can customize a \textit{structure-exploited} and \textit{warm-starting-supported} IPM-based QP solver to solve the QP at kHz rates, which general QPs cannot do, as demonstrated in Subsection \ref{sec_IPM_algorithm}.

\subsection{Koopman-BoxQP: Dynamics-relaxed Construction}
\begin{wrapfigure}[9]{r}{0.52\textwidth} 
\vspace{-10pt}
\rule{\linewidth}{0.6pt}
\textbf{Nonlinear MPC $\rightarrow$ \textit{Koopman-BoxQP}}
{\small
\begin{equation}\label{eqn_KoopmanMPC_BoxQP}
\begin{aligned}
\min_{U,X} &~ (X-\bar{x}_r)^\top \bar{W}_x(X-\bar{x}_r) + (U-\bar{u}_r)^\top \bar{W}_u(U-\bar{u}_r)\\
&~  + U^\top \bar{R} U + \rho \|X- \mathbf{E} \psi\big(x(t)\big) - \mathbf{F}U\|_2^2  \\
\text{s.t.} &~ -\mathbf{1} \leq U \leq \mathbf{1}, \\
&~ -\mathbf{1} \leq X \leq \mathbf{1}
\end{aligned}
\end{equation}
}
\rule{\linewidth}{0.6pt}
\end{wrapfigure}
This article proposes an alternative, the \textit{dynamic relaxation approach}, which does not strictly enforce the Koopman prediction model \eqref{eqn_X_U_E_F} but instead incorporates it into the objective through a penalty term, which results in a strongly convex BoxQP  \eqref{eqn_KoopmanMPC_BoxQP}, where $U\triangleq\mathrm{col}(u_0,\cdots{},u_{N-1})$, $X\triangleq\mathrm{col}(x_1,\cdots{},x_{N})$, and $\rho>0$ is a large penalty parameter reflecting the confidence in the Koopman model’s accuracy. 

\begin{wrapfigure}[5]{r}{0.38\textwidth} 
\vspace{-20pt}
\begin{equation}\label{eqn_BoxQP}
    \begin{aligned}
        \min_{z}&~\frac{1}{2}z^\top H z + z^\top h(x(t))\\
        \text{s.t.}&~ -\mathbf{1}\leq z \leq \mathbf{1}        
    \end{aligned}
\end{equation}
\end{wrapfigure}
For simplicity, we denote the decision vector $z\triangleq\mathrm{col}(U,X)\in\rr^n$ ($n=N (n_u+n_x)$), and the proposed \textit{Koopman-BoxQP} \eqref{eqn_KoopmanMPC_BoxQP} is constructed as shown in \eqref{eqn_BoxQP}, where
\begin{equation}\label{eqn_H_def}
H\triangleq \rho \! \left[\begin{array}{@{}cc@{}}
    \mathbf{F}^\top \mathbf{F} & -\mathbf{F}^\top \\
    -\mathbf{F} & I 
\end{array} \right] + \left[\begin{array}{@{}cc@{}}
    \bar{W}_u+\bar{R}&  \\
     & \bar{W}_x
\end{array}\right]\!\succ0, \quad h(x(t))\triangleq \rho\!\left[\begin{array}{@{}c@{}}
      \mathbf{F}^\top\mathbf{E}  \\
      -\mathbf{E} 
 \end{array}\right]\! \psi(x(t)) - \left[\begin{array}{@{}c@{}}
      \bar{W}_u\bar{u}_r  \\
      \bar{W}_x\bar{x}_r 
 \end{array}\right].    
\end{equation}
The dynamics-relaxed \textit{Koopman-BoxQP} \eqref{eqn_KoopmanMPC_BoxQP} can be viewed as an alternative approach to softening the state constraints, since the error in satisfying the prediction model can be equivalently interpreted as a relaxation of the state constraints. 

The dynamics-relaxed \textit{Koopman-BoxQP} \eqref{eqn_KoopmanMPC_BoxQP} is a parametric optimization, which is proved below to be feasible and have a Lipschitz-continuous feedback policy. The former property prevents solver failure, and the latter property avoids chattering or abrupt control actions. Future work will leverage the proven Lipschitz constant of the parametric BoxQP, incorporating that of the Koopman observable, to certify the robustness of Koopman-MPC solutions \citep{teichrib2023efficient}.

\begin{remark}[\textit{Feasibility-guaranteed}]
    The dynamics-relaxed Koopman-BoxQP \eqref{eqn_KoopmanMPC_BoxQP} (or \eqref{eqn_BoxQP}) always admits a unique solution due to convexity and non-empty constraints, thus guaranteeing feasibility.
\end{remark}

\begin{lemma}[\textit{Lipschitz-guaranteed}]\label{lemma_Lipschitz}
Assume that the lifted mapping $\psi(\cdot)$ is Lipschitz continuous with constant $L_{\psi}$. The feedback policy of Koopman-BoxQP \eqref{eqn_KoopmanMPC_BoxQP},
    \begin{equation}
        u_0(x(t))\triangleq C_{\text{policy}}z^*(x(t)),
    \end{equation}
    where $C_{\text{policy}}=[I_{n_u},0]$), is Lipschitz continuous with constant $\frac{ \rho\sqrt{\lambda_{\max}(\mathbf{E}^\top(\mathbf{F}\mathbf{F}^\top+I)\mathbf{E})}L_{\psi}}{\lambda_{\min}(H)}$.
\end{lemma}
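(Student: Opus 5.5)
The plan is to read the Koopman-BoxQP \eqref{eqn_BoxQP} as a strongly convex parametric QP in which only the linear term depends on the parameter. Then $z^*(\cdot)$ is the composition of a Lipschitz projection-type map with the affine map $\psi\mapsto h$, and the three pieces of the stated constant appear naturally.

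Fix two feedback states $x,x'$ and write $z=z^*(x)$, $z'=z^*(x')$. The optimality condition of \eqref{eqn_BoxQP} over the box $\mathcal{B}=[-\mathbf{1},\mathbf{1}]$ is the variational inequality
\[
(Hz+h(x))^\top(y-z)\ge 0 \quad \text{for all } y\in\mathcal{B}.
\]
The same inequality holds for $z'$ with $h(x')$. I take $y=z'$ in the first inequality and $y=z$ in the second, then add them. This gives
\[
(z-z')^\top H(z-z') \le (z-z')^\top\big(h(x')-h(x)\big).
\]
Using $H\succ0$ from \eqref{eqn_H_def} on the left and Cauchy--Schwarz on the right yields
\[
\lambda_{\min}(H)\|z-z'\|_2^2\le \|z-z'\|_2\,\|h(x)-h(x')\|_2,
\]
so $\|z-z'\|_2\le \|h(x)-h(x')\|_2/\lambda_{\min}(H)$. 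Since $C_{\text{policy}}$ selects a sub-block, $\|u_0(x)-u_0(x')\|_2\le\|z-z'\|_2$.

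Next I bound the parameter map. By \eqref{eqn_H_def}, the reference terms cancel, and
\[
h(x)-h(x')=\rho\,G\,\big(\psi(x)-\psi(x')\big), \qquad G\triangleq\left[\begin{array}{@{}c@{}}\mathbf{F}^\top\mathbf{E}\\ -\mathbf{E}\end{array}\right].
\]
Its operator norm is $\|G\|_2=\sqrt{\lambda_{\max}(G^\top G)}$, and
\[
G^\top G=\mathbf{E}^\top\mathbf{F}\mathbf{F}^\top\mathbf{E}+\mathbf{E}^\top\mathbf{E}=\mathbf{E}^\top(\mathbf{F}\mathbf{F}^\top+I)\mathbf{E}.
\]
Combining this with $\|\psi(x)-\psi(x')\|_2\le L_\psi\|x-x'\|_2$ gives exactly the claimed constant $\rho\sqrt{\lambda_{\max}(\mathbf{E}^\top(\mathbf{F}\mathbf{F}^\top+I)\mathbf{E})}\,L_\psi/\lambda_{\min}(H)$.

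The only step needing care is the first one: the solution map of a QP with constraints is in general only piecewise affine, and one might try to argue via active sets. I avoid that by using the monotonicity (variational-inequality) argument above, which holds because the box does not depend on $x$. That step also uses only strong convexity, with modulus $\lambda_{\min}(H)>0$, which is guaranteed by $H\succ0$ in \eqref{eqn_H_def}. Uniqueness of $z^*$, noted in the feasibility remark, ensures the policy is well defined. Everything else is routine norm algebra.
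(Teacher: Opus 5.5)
Your proposal is correct and follows essentially the same route as the paper: the variational-inequality characterization over the fixed box, adding the two inequalities to get $\Delta z^\top H\Delta z\le -\Delta z^\top\Delta h$, strong convexity plus Cauchy--Schwarz, and then the bound $\|\Delta h\|_2\le\rho\|[\mathbf{F}^\top\mathbf{E};-\mathbf{E}]\|_2L_\psi\|\Delta x\|_2$ together with $\|C_{\text{policy}}\|_2=1$. You also spell out two details the paper leaves implicit: the identity $G^\top G=\mathbf{E}^\top(\mathbf{F}\mathbf{F}^\top+I)\mathbf{E}$, and the cancellation of the reference terms in $h$.
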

\begin{proof}
    The optimal solution $z^*(x(t))$ can be characterized by the variational inequality: 
    \begin{equation}\label{eqn_VI}
          \big(Hz^*(x(t))+h(x(t))\big)^\top\big(z-z^*(x(t))\big), \ \forall z\in [-\mathbf{1},\mathbf{1}].  
    \end{equation}
    Let us take $x(t_1),x(t_2)$ with corresponding solutions $z_1\triangleq z^*(x(t_1))$,  $z_2\triangleq z^*(x(t_2))$ and define $\Delta h\triangleq h(x(t_2))-h(x(t_1))$, $\Delta z\triangleq z_2 - z_1$, $\Delta x\triangleq x(t_2)-x(t_1)$. Applying \eqref{eqn_VI} at each solution---$\text{for~} z_1: (Hz_1+h(x(t_1)))^\top(z_2-z_1)\geq0,~\text{for~} z_2: (Hz_2+h(x(t_2)))^\top(z_1-z_2)\geq0$---and then adding the two inequalities results in 
\[
(Hz_1+h(x(t_1)))^\top(z_2-z_1) + (Hz_2+h(x(t_2)))^\top(z_1-z_2)\geq0,
\]
that is, $\Delta z^\top H \Delta z\leq -\Delta z^\top \Delta h$.
$H\succ0$ implies the strong convexity inequality $\lambda_{\min}(H)\|\Delta z\|_2^2 \leq \Delta z^\top H \Delta z$. By the Cauchy-Schwarz inequality and the assumption, $| \Delta z^\top \Delta h| \leq \| \Delta z\|_2 \|\Delta h\|_2$ and
\[
    \| \Delta h\|_2\leq \rho \|[\mathbf{F}^\top \mathbf{E};-\mathbf{E}]\|_2 L_{\psi}\|\Delta x\|_2=\rho \sqrt{\lambda_{\max}(\mathbf{E}^\top(\mathbf{F}\mathbf{F}^\top+I)\mathbf{E})}L_{\psi}\|\Delta x\|_2.
\]
Combining those inequalities gives that
\[
\lambda_{\min}(H)\|\Delta z\|_2^2 \leq   \| \Delta z\|_2  \rho \sqrt{\lambda_{\max}(\mathbf{E}^\top(\mathbf{F}\mathbf{F}^\top+I)\mathbf{E})}L_{\psi}\|\Delta x\|_2.
\]
That is, $\|\Delta z\|_2\leq\frac{ \rho \sqrt{\lambda_{\max}(\mathbf{E}^\top(\mathbf{F}\mathbf{F}^\top+I)\mathbf{E})}L_{\psi}}{\lambda_{\min}(H)}\|\Delta x\|_2$, namely,
\[
\quad\|u_0(x(t_2))-u_0(x(t_1))\|_2 \leq \|C_{\text{policy}}\| \|\Delta z\|_2\leq\frac{ \rho\sqrt{\lambda_{\max}(\mathbf{E}^\top(\mathbf{F}\mathbf{F}^\top+I)\mathbf{E})}L_{\psi}}{\lambda_{\min}(H)}\|x(t_2)-x(t_1)\|_2,
\]
which completes the proof.
\end{proof}

\newpage

\subsection{Feasible Mehrotra's IPM Algorithm for BoxQP}\label{sec_IPM_algorithm}
\begin{wrapfigure}[12]{r}{0.4\textwidth} 
\vspace{-10pt}
\rule{\linewidth}{0.6pt}
\textbf{KKT condition of BoxQP}
\begin{subequations}\label{eqn_KKT}
\begin{align}
    Hz + h(x(t)) + \gamma - \theta = 0,\label{eqn_KKT_a}\\
    z + \phi - \mathbf{1}_n=0,\label{eqn_KKT_b}\\
    z - \psi + \mathbf{1}_n=0,\label{eqn_KKT_c}\\
    (\gamma,\theta,\phi,\psi)\geq0,\label{eqn_KKT_d}\\
    \gamma \pdot \phi = 0,\label{eqn_KKT_e}\\
    \theta \pdot \psi = 0,\label{eqn_KKT_f}
\end{align}
\end{subequations}   
\rule{\linewidth}{0.6pt}
\end{wrapfigure}
According to \citet[Ch 5]{boyd2004convex}, the Karush–Kuhn–Tucker (KKT) condition of the BoxQP \eqref{eqn_BoxQP} is \eqref{eqn_KKT}, where $\gamma,\theta$ are the Lagrangian variables of the lower and upper bound, respectively; $\phi,\psi$ are the slack variables of the lower and upper bound, respectively; $\pdot$ is the Hadamard product, i.e., $\gamma\pdot \phi = \mathrm{col}(\gamma_1\phi_1,\gamma_2\phi_2,\cdots{},\gamma_n\phi_n)$. The BoxQP \eqref{eqn_BoxQP} may be ill-conditioned since the penalty parameter $\rho$ needs to be large to ensure strong closed-loop performance. To address this issue, this article employs second-order IPMs, with a particular focus on the most computationally efficient variant: Mehrotra’s predictor–corrector IPM \citep{mehrotra1992implementation}. In practice, Mehrotra’s predictor–corrector IPM typically achieves highly accurate solutions within $50\sim75$ iterations, regardless of whether the initial point is strictly feasible. This remarkably fast convergence makes the method the foundation of most IPM software packages.

\textit{An important numerical observation in this article} is that Mehrotra’s predictor–corrector IPM typically takes \textit{about 10 iterations} to find a highly accurate solution of BoxQP \eqref{eqn_BoxQP} when applying the following cost-free initialization strategy. To demonstrate this, denote the feasible region by $\mathcal{F}$, i.e., $\mathcal{F}=\{(z,\gamma,\theta,\phi,\psi):\eqref{eqn_KKT_a}\mathrm{-}\eqref{eqn_KKT_c},(\gamma,\theta,\phi,\psi)\geq0\}$ and the set of strictly feasible points by $    \mathcal{F}^+\triangleq\{(z,\gamma,\theta,\phi,\psi):\eqref{eqn_KKT_a}\mathrm{-}\eqref{eqn_KKT_c},(\gamma,\theta,\phi,\psi)>0\}$. For a point $\in\mathcal{F}^+$, the residuals of  \eqref{eqn_KKT_a}--\eqref{eqn_KKT_c} are zeros and only the residuals of the complementary equations \eqref{eqn_KKT_e}--\eqref{eqn_KKT_f} are not.

\subsection{Cost-free strictly feasible cold- and warm-starting initializations}
IPMs for general optimization problems, including QPs, generally do not support warm-start initialization, which is an unfortunate limitation as warm-start initialization can substantially accelerate convergence and reduce iteration counts in real-time MPC applications. A notable advantage of the proposed \textit{Koopman-BoxQP} \eqref{eqn_KoopmanMPC_BoxQP} is its ability to support strictly feasible cold- and warm-start initializations without any extra computational overhead.
\begin{remark}\label{remark_initialization}
(\textbf{Strictly feasible cold-start}): Inspired from \citep{wu2025direct,wu2025quadratic}, the initial point
\begin{equation}\label{eqn_cold_initialization_stragegy}
\footnotesize
    (z^0, \gamma^0, \theta^0, \phi^0,\psi^0)   = (0,\|h(x(t))\|_\infty\mathbf{1}_n - \tfrac{1}{2}h(x(t)),\|h(x(t))\|_\infty\mathbf{1}_n + \tfrac{1}{2}h(x(t)),\mathbf{1}_n,\mathbf{1}_n)\in\mathcal{F}^+.
\end{equation}
\end{remark}

\begin{remark}
(\textbf{Strictly feasible warm-start}): 
The optimal solution at the previous sampling time $t-1$: $ U^{t-1}=\mathrm{col}(u_0^{t-1},\cdots{},u_{N-1}^{t-1}),~X^{t-1}=\mathrm{col}(x_1^{t-1},\cdots{},x_{N}^{t-1})$, can be shifted ahead one step to obtain a warm start initial point: $z^{t,\text{guess}}\triangleq\mathrm{col}(u_1^{t-1},\cdots{},u_{N-1}^{t-1},u_{N-1}^{t-1},x_2^{t-1},\cdots{},x_{N}^{t-1},x_{N}^{t-1})$ (Clearly, $-\mathbf{1}_n\leq  z^{t,\text{guess}}\leq\mathbf{1}_n$) to be used in solving the Koopman-BoxQP \eqref{eqn_KoopmanMPC_BoxQP} at the current sampling time $t$. The warm start initial point is 
\begin{equation}\label{eqn_warm_initialization_stragegy}
    z^0 = z^{t,\text{guess}},~
    \gamma^0 =\|\bar{h}\|_{\infty}\mathbf{1}_n - \textbf{}\tfrac{1}{2}\bar{h},~
    \theta^0 =\|\bar{h}\|_{\infty}\mathbf{1}_n + \tfrac{1}{2}\bar{h},~
    \phi^0 = \mathbf{1}_n -  z^{t,\text{guess}},~
    \psi^0 = \mathbf{1}_n +  z^{t,\text{guess}},
\end{equation}
where $\bar{h}\triangleq H z^{t,\text{guess}}+h(x(t))$. Notably, we have $(z^0,\gamma^0,\theta^0,\phi^0,\psi^0)\in\mathcal{F}^+$.
\end{remark}

\newpage
\subsection{Algorithm description and efficient computation of Newton systems}
\begin{wrapfigure}[35]{r}{0.4\textwidth}
\vspace{-20pt}
\captionsetup{type=algorithm}
\begin{minipage}{0.5\textwidth}
\begin{algorithm}[H]
    \caption{Feasible Mehrotra's predictor-corrector IPM for BoxQP \eqref{eqn_BoxQP}} \label{alg_IPM}
    {\scriptsize
    \textbf{Input}: Initializing $(z^0,\gamma^0,\theta^0,\phi^0,\psi^0)$ from Eqn. \eqref{eqn_cold_initialization_stragegy}, a desired optimal level $\epsilon$, and the maximum number of iterations $N_{\max}$.
    \vspace*{.1cm}\hrule\vspace*{.1cm}

    \textbf{for} $k=0,1, 2,\cdots{},N_{\max}-1$ \textbf{do}
    \begin{enumerate}[label*=\arabic*., ref=\theenumi{}]
        \item $\mu^k \leftarrow[(\gamma^k)^\top \phi^k + (\theta^k)^\top \psi^k]/(2n)$;
        \item if $\mu^k\leq (2n)\epsilon$, then break;
        \item Compute $(\Delta z^{\text{aff}},\Delta \gamma^{\text{aff}},\Delta \theta^{\text{aff}},\Delta \phi^{\text{aff}},\Delta \psi^{\text{aff}})$ by solving
        \begin{equation}\label{eqn_predictor_step}
        \scriptsize
           J^k \left[\begin{array}{c}
                 \Delta z^{\text{aff}}\\
                 \Delta \gamma^{\text{aff}}\\
                 \Delta \theta^{\text{aff}}\\
                 \Delta \phi^{\text{aff}}  \\
                 \Delta \psi^{\text{aff}}
            \end{array}\right] = \left[\begin{array}{c}
                 0  \\
                 0 \\
                 0 \\
                 -\gamma^k \pdot \phi^k\\
                 -\theta^k \pdot \psi^k
            \end{array}\right]
        \end{equation}
        where 
        \begin{equation}
        \scriptsize
            J^k\triangleq\left[\begin{array}{ccccc}
                H & I & -I & 0 & 0 \\
                I & 0 & 0 & I & 0 \\
                I & 0 & 0 & 0 & -I\\
                0 & D(\phi^k) & 0 & D(\gamma^k) & 0\\
                 0 & 0 & D(\psi^k)  & 0 & D(\theta^k) \\
            \end{array}\right]
        \end{equation}
        \item $\alpha^{\text{aff}}=\min(1,0.99\min_{\Delta \gamma_i^{\text{aff}}<0}\frac{-\gamma^k_i}{\Delta \gamma_i^{\text{aff}}},0.99\min_{\Delta \theta_i^{\text{aff}}<0}\frac{-\theta^k_i}{\Delta \theta_i^{\text{aff}}}$ $,0.99\min_{\Delta \phi_i^{\text{aff}}<0}\frac{-\phi^k_i}{\Delta \phi_i^{\text{aff}}},0.99\min_{\Delta \psi_i^{\text{aff}}<0}\frac{-\psi^k_i}{\Delta \psi_i^{\text{aff}}})$;
        \item $\mu^{\text{aff}}=[(\gamma^k+\alpha^{\text{aff}}\Delta \gamma^{\text{aff}})^\top (\phi^k+\alpha^{\text{aff}}\Delta \phi^{\text{aff}})$$+(\theta^k+\alpha^{\text{aff}}\Delta \theta^{\text{aff}})^\top (\psi^k+\alpha^{\text{aff}}\Delta \psi^{\text{aff}})]/(2n)$;
        \item $\sigma\leftarrow \left(\frac{\mu^{\text{aff}}}{\mu^k}\right)^3$;
        \item Compute $(\Delta z,\Delta \gamma,\Delta \theta,\Delta \phi,\Delta \psi)$ by solving
          \begin{equation}\label{eqn_corrector_step}
          \tiny
           J^k \left[\begin{array}{c}
                 \Delta z\\
                 \Delta \gamma\\
                 \Delta \theta\\
                 \Delta \phi \\
                 \Delta \psi
            \end{array}\right] = \left[\begin{array}{c}
                 0  \\
                 0 \\
                 0 \\
                 -\gamma^k \pdot \phi^k-\Delta\gamma^{\text{aff}}\pdot\Delta\phi^{\text{aff}}+\sigma\mu^k\mathbf{1}_n\\
                 -\theta^k \pdot \psi^k-\Delta\theta^{\text{aff}}\pdot\Delta\psi^{\text{aff}}+\sigma\mu^k\mathbf{1}_n
            \end{array}\right]
        \end{equation}
        \item $\alpha=\min(1,0.99\min_{\Delta \gamma_i<0}\frac{-\gamma^k_i}{\Delta \gamma_i},0.99\min_{\Delta \theta_i<0}\frac{-\theta^k_i}{\Delta \theta_i}$ $,0.99\min_{\Delta \phi_i<0}\frac{-\phi^k_i}{\Delta \phi_i},0.99\min_{\Delta \psi_i<0}\frac{-\psi^k_i}{\Delta \psi_i})$;
        \item $z^{k+1}\leftarrow z^k+\alpha\Delta z,\leftarrow \gamma^k+\alpha\Delta \gamma,~\leftarrow \theta^k+\alpha\Delta \theta,\leftarrow \phi^k+\alpha\Delta \phi,~\psi^{k+1}\leftarrow \psi^k+\alpha\Delta \psi$;
    \end{enumerate}
    \textbf{end}\vspace*{.1cm}\hrule\vspace*{.1cm}
    \textbf{Output:} $z^{k+1}$.
    }
\end{algorithm}
\end{minipage}
\end{wrapfigure}
Algorithm \ref{alg_IPM} presents a \textit{feasible} variant of Mehrotra’s predictor–corrector IPM, specifically tailored for BoxQP \eqref{eqn_BoxQP}. At each iteration, Steps 3 and 4 of Algorithm \ref{alg_IPM} compute the predictor and corrector directions, respectively, using the same coefficient matrix $J_k$; thus, only a single matrix factorization is required. Moreover, the structured sparsity of $J_k$ enables a more computationally efficient Cholesky factorization on a reduced-dimensional linear system, from $5N\times(n_u+n_x)\rightarrow N\times n_u$. Considering \eqref{eqn_predictor_step} and \eqref{eqn_corrector_step}, which differ only in the last two terms on the right-hand side, denote these terms as $r^1$ and $r^2$, respectively. Then, by letting
\begin{equation}\label{eqn_Delta_gamma_theta_phi_psi}
\begin{aligned}
        \Delta \gamma &=  \frac{\gamma^k}{\phi^k}\Delta z + \frac{r^1}{\phi^k},~\Delta \theta = -\frac{\theta^k}{\psi^k}\Delta z+\frac{r^2}{\psi^k},\\
        \Delta\phi &= - \Delta z,~ \Delta\psi = \Delta z,       
\end{aligned} 
\end{equation}
Equation \eqref{eqn_predictor_step} (or \eqref{eqn_corrector_step}) can be reduced to a more compact system of linear equations,
\begin{equation}\label{eqn_compact_linsys}
 \bar{H}\Delta z=\frac{r^1}{\phi} - \frac{r^2}{\psi},
\end{equation}
where the coefficient matrix $\bar{H}$ has a $2 \times 2$ block structure from the definition of $H$ in \eqref{eqn_H_def}:
\begin{equation}
    \bar{H}\triangleq H+D\!\left(\frac{\gamma^k}{\phi^k}+\frac{\theta^k}{\psi^k}\right) =\left[\begin{array}{@{}cc@{}}
        \bar{H}_{11} &  -\rho \mathbf{F}^\top\\
        \textcolor{blue}{-}\rho \mathbf{F} & \bar{H}_{22}
    \end{array}\right]\!\succ0
\end{equation}
(where $D(\cdot)$ denotes the diagonal matrix of a vector) with
\[
\begin{aligned}
    \bar{H}_{11}&\triangleq\rho\mathbf{F}^\top\mathbf{F}+\bar{W}_u+\bar{R}+D\!\left(\frac{\gamma_{1:n_1}^k}{\phi_{1:n_1}^k}+\frac{\theta_{1:n_1}^k}{\psi_{1:n_1}^k}\right)\\
    \bar{H}_{22}&\triangleq \rho I +\bar{W}_x+D\!\left(\frac{\gamma_{n_1+1:n}^k}{\phi_{n_1+1:n}^k}+\frac{\theta_{n_1+1:n}^k}{\psi_{n_1+1:n}^k}\right)
\end{aligned}
\]
where $\bar{H}_{11}\succ0\in\rr^{n_1\times n_1}$, $\bar{H}_{22}\succ0\in\rr^{n_2\times n_2}$, $n_1=N n_u$, and $n_2=N n_x$. By assumption the weighting matrix $W_x$ is diagonal, thus $\bar{W}_x$ and $\bar{H}_{22}$ are also diagonal. By exploiting the diagonal structure of $\bar{H}_{22}$, computational savings in solving  \eqref{eqn_compact_linsys} can be achieved through solving the system:
\begin{equation}\label{eqn_efficient_linsys}
\left(\bar{H}_{11}-\rho^2 \mathbf{F}^\top \bar{H}_{22}^{-1}\mathbf{F}\right)\! \Delta z_{1:n_1} 
   = \frac{r^1_{1:n_1}}{\phi_{1:n_1}^k} - \frac{r^2_{1:n_1}}{\psi_{1:n_1}^k} + \rho \mathbf{F}^\top\bar{H}_{22}^{-1}\!\left(\frac{r^1_{n_1+1:n}}{\phi_{n_1+1:n}^k} - \frac{r^2_{n_1+1:n}}{\psi_{n_1+1:n}^k} \right)    
\end{equation}
and $\Delta z_{n_1+1:n} = \bar{H}_{22}^{-1}\!\left(\frac{r^1_{n_1+1:n}}{\phi_{n_1+1:n}^k} - \frac{r^2_{n_1+1:n}}{\psi_{n_1+1:n}^k} + \rho \mathbf{F}\Delta z_{1:n_1} \right)$ to obtain the solution $\Delta z=\mathrm{col}(\Delta z_{1:n_1},\Delta z_{n_1+1:n})$.
\begin{remark}
By the Schur Complement lemma, $\bar{H}_{11}-\rho^2 \mathbf{F}^\top \bar{H}_{22}^{-1}\mathbf{F}\succ0$, and the matrix is symmetric. Then, the Cholesky factorization with the cost $O((N n_u)^3)$ for \eqref{eqn_efficient_linsys} can be applied, which is crucial for achieving computational speedup.
\end{remark}

\section{Numerical Examples}
\subsection{Practical behavior of cold-starting Mehrotra's IPM algorithm on random BoxQPs}
\begin{wrapfigure}[17]{r}{0.5\textwidth}
\vspace{-10pt}
    \centering
    \includegraphics[width=1\linewidth]{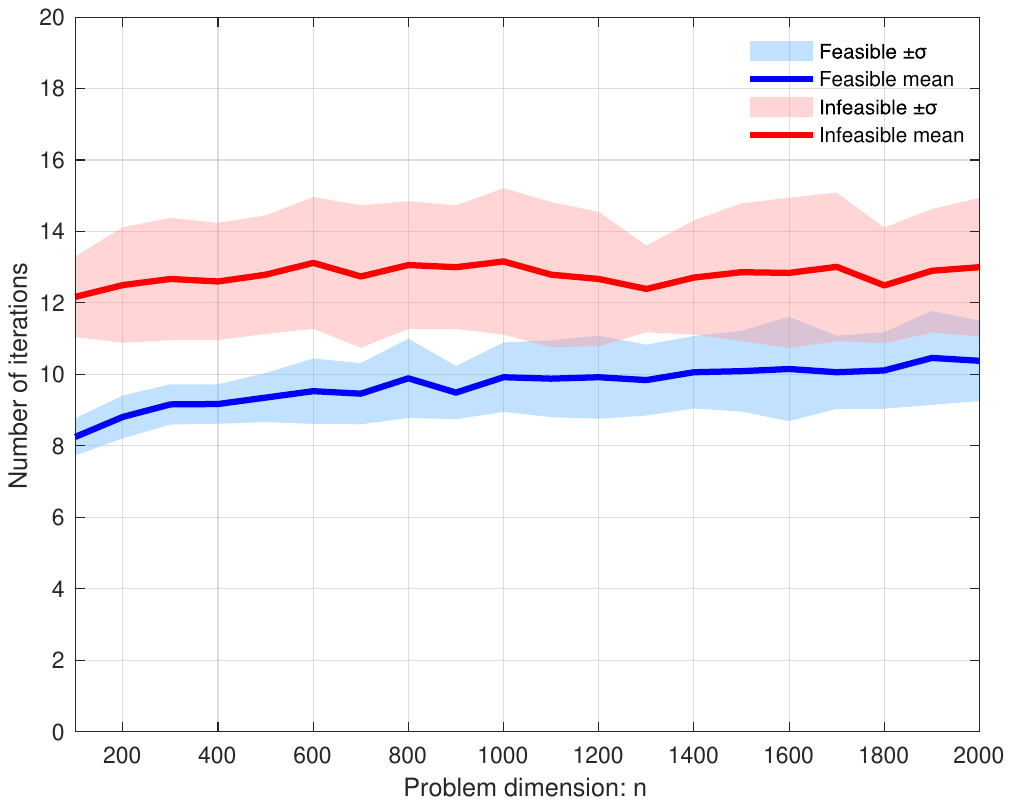}
    \caption{Iteration counts of feasible and infeasible Mehrotra's IPM algorithms on ill-conditioned random BoxQP problems with dimensions ranging from $100$ to $2000$.}\label{fig_iters_feasible_infeasible}
\end{wrapfigure}
The main inspiration for using the Koopman-BoxQP formulation comes from our contributed numerical observation as follows,
\begin{quote}
\textit{Feasible Mehrotra's IPM algorithm usually takes $\sim 10$ iterations on (ill-conditioned) random BoxQP problems, regardless of the problem dimension.}
\end{quote}
To verify this observation, we generate random BoxQP problems with dimensions $n$ ranging from $100$ to $2000$, where the Hessian matrix has a condition number of $10^6$. Algorithm \ref{alg_IPM}, with the desired optimality tolerance set $\epsilon=10^{-6}$, is applied both with and without the cold-start initialization strategy, corresponding to the feasible and infeasible variants, respectively, for solving the generated BoxQP instances. For each specified problem dimension, the BoxQP is randomly generated and solved 100 times to characterize the statistical behavior of the iteration counts of feasible and infeasible Mehrotra's IPM algorithms, which are plotted in Fig.\ \ref{fig_iters_feasible_infeasible}.  Clearly, Fig.\ \ref{fig_iters_feasible_infeasible} confirms our key numerical observation: the feasible Mehrotra’s IPM algorithm (Algorithm \ref{alg_IPM}) is \textbf{scalable}, requiring only about 10 iterations to achieve highly accurate solutions.

\subsection{Performance validation of \textit{Koopman-BoxQP} on PDE NMPC problems}
This section applies our proposed \textit{Koopman-BoxQP} to solve a PDE-MPC problem with state and control input constraints, which is a large-sized QP problem with $1040$ variables and $2080$ constraints. The PDE plant under consideration is the nonlinear Korteweg-de Vries (KdV) equation that models the propagation of acoustic waves in plasma or shallow water waves \citep{miura1976korteweg} as
\begin{equation}\label{eqn_KdV}
\frac{\partial y(t, x)}{\partial t}+y(t, x) \frac{\partial y(t, x)}{\partial x}+\frac{\partial^3 y(t, x)}{\partial x^3}=u(t,x)
\end{equation}
where $x\in[-\pi,\pi]$ is the spatial variable. Consider the control input $u(t,x)=\sum_{i=1}^4u_i(t)v_i(x)$, in which the four coefficients $\{u_i(t)\}$ are subject to the constraint $[-1,1]$, and $v_i(x)$ are predetermined spatial profiles given as $v_i(x)=e^{-25(x-m_i)^2}$, with $m_1=-\pi/2$, $m_2=-\pi/6$, $m_3=\pi/6$, and  $m_4=\pi/2$. The control objective is to adjust ${u_i(t)}$ so that the spatial profile $y(t,x)$ tracks the given reference signal. The spatial profile $y(t,x)$ is uniformly discretized into $100$ spatial nodes. These 100 discretized nodes serve as the system states, each constrained within $[-1,1]$, while the control inputs are likewise bounded within $[-1,1]$. Choosing the MPC prediction horizon as $N=10$ results in a medium-scale optimization problem with $1040$ variables and $2080$ constraints. For the MPC settings, the state cost matrix is set to $W_x=I_{100}$, and the control inputs matrix is set to $W_u=0.05 I_{4}$. The state references $x_r\in\mathbb{R}^{100}$ are sinusoidal signals for a $50$ s simulation time, and the control input reference is a constant value with $u_r=0$. We employ a spectral method based on the Fourier transform and a split-step scheme to solve the nonlinear KdV \eqref{eqn_KdV}, for data generation and MPC closed-loop simulation. The sampling time is chosen to be $\Delta t=0.01$. 

\begin{wrapfigure}[28]{r}{0.5\textwidth} 
\vspace{-10pt}
\begin{minipage}[b]{0.5\textwidth}
\centering
\includegraphics[width=\linewidth]{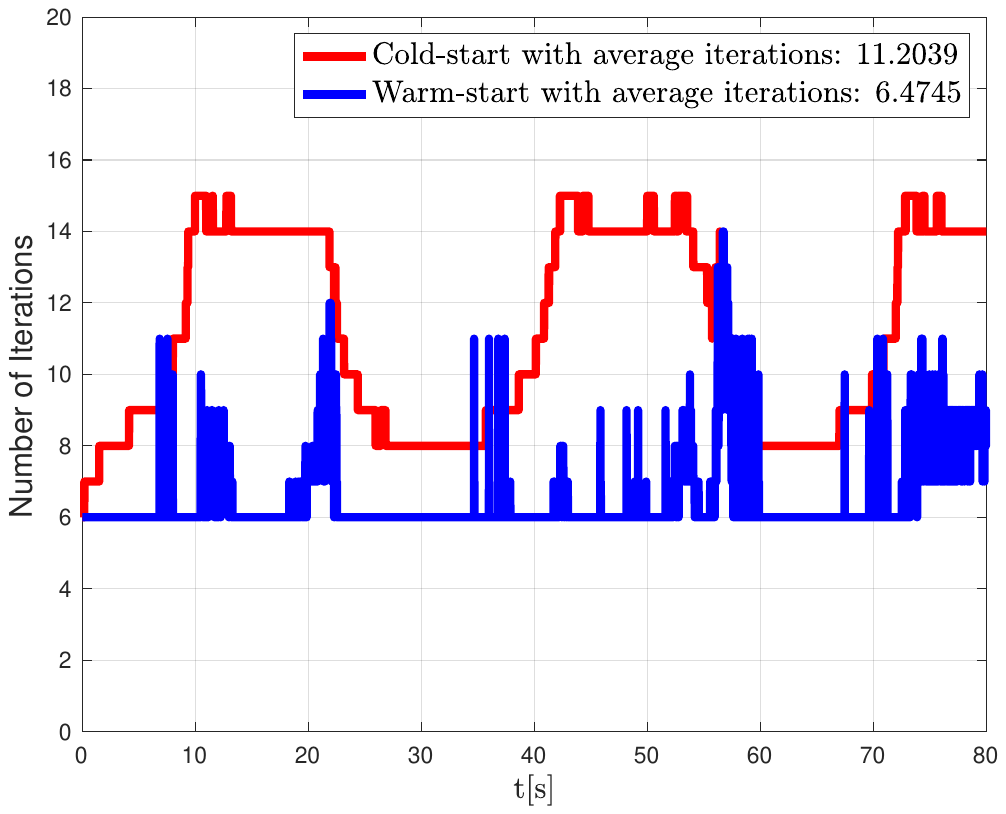}
\caption{Iterations comparison of Algorithm \ref{alg_IPM} with Cold-start and Warm-start.}
\label{fig_WarmStart_ColdStart}
\end{minipage}
\hfill
\begin{minipage}[b]{0.5\textwidth}
\centering
\vspace{8pt} 
\begin{tabular}{lcc}
\toprule
\multirow{2}{*}{QP Solver}  & \multicolumn{2}{c}{Average Execution Time [s]\footnotemark} \\ 
& cold start & warm start\\
\midrule
Quadprog & 0.1127 &0.1127 \\
OSQP     & $9.7$$\times$$10^{-3}$  & $1.2$$\times$$10^{-3}$ \\
SCS      & $5.8$$\times$$10^{-3}$ & $5.8$$\times$$ 10^{-3}$\\
Algorithm \ref{alg_IPM} & $\textcolor{blue}{\mathbf{5.57\times10^{-4}}}$ &  $\textcolor{blue}{\mathbf{3.63\times10^{-4}}}$ \\ 
\bottomrule
\end{tabular}
\captionof{table}{Execution time comparison between Algorithm \ref{alg_IPM} and other state-of-the-art QP solvers.}
\label{tab_execution_time}
\end{minipage}
\end{wrapfigure}
\footnotetext{The execution time results are based on MATLAB's C-MEX implementation of Algorithm \ref{alg_IPM} running on a Mac mini with an Apple M4 Chip (10-core CPU and 16 GB RAM).} 
The setting for our closed-loop simulation includes: \textit{(i) Data generation:} The data are collected from 1000 simulation trajectories with 200 samples. At each simulation, the initial condition of the spatial profile is a random combination of four given spatial profiles, i.e., $y_1^0(0,x)=e^{-(x-\pi/2)^2}$, $y_2^0(0,x)=-\operatorname{sin}(x/2)^2$, $y_3^0(0,x)=e^{-(x+\pi/2)^2}$, $y_4^0(0,x)=\operatorname{cos}(x/2)^2$. The four control inputs $u_i(t)$ are distributed uniformly in $[-1,1]$; \textit{(ii) Koopman predictor:} The lift function $\psi$ is composed of the original 100 spatial states and 200 thin-plate RBFs with random centers, which leads to the lifted state dimension $N_{\psi}=100+ 200=300$. Then the lifted linear predictor with $A\in\mathbb{R}^{300\times 300}$ and $B\in\mathbb{R}^{300\times 4}$ is obtained from the Moore-Penrose pseudoinverse of the lifting data matrix, and its output matrix is $C=[I_{100},0]\in\mathbb{R}^{100\times 300}$; \textit{(iii) \textit{Koopman-BoxQP} formulation:} Given the matrices $\{A,B,C\}$ calculated from the above previous Koopman predictor step, the approximated multi-step Koopman model \eqref{eqn_X_U_E_F} is constructed and embedded into the \textit{Koopman-BoxQP} \eqref{eqn_KoopmanMPC_BoxQP} with dynamic penalty parameter $\rho=10^2$. This results in a BoxQP with $1040$ variables and $2080$ constraints.

The proposed Algorithm \ref{alg_IPM} with the stopping criteria $\epsilon=1$$\times$$10^{-6}$ is applied to solve the resulting large-size \textit{Koopman-BoxQP} problem \eqref{eqn_KoopmanMPC_BoxQP} in the closed-loop simulation. The closed-loop performance, shown in Fig.\ \ref{fig_kdv}, demonstrates that the spatial profile $y(t,x)$ accurately tracks the desired reference signals while remaining within the state constraint $[-1,1]$, even when the reference signals exceed this bound. The control inputs also remain strictly within their limits of $[-1,1]$. This numerical case study demonstrates that the proposed \textit{Koopman-BoxQP} approach can accurately control the evolution of large-scale states.
\begin{figure}[!htbp]\label{fig}
\begin{picture}(140,110)
\put(0,-12){\includegraphics[width=55mm]{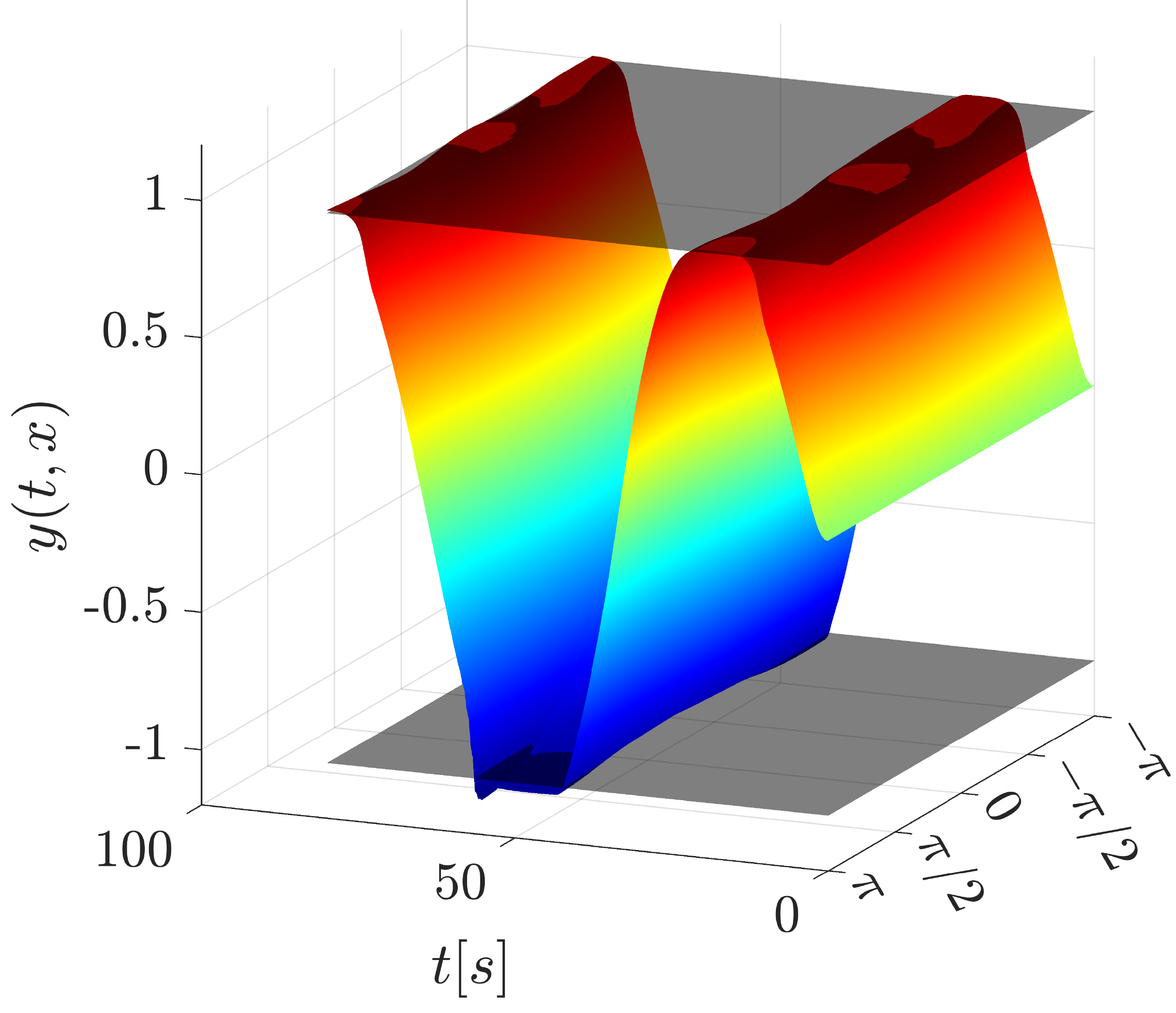}}
\put(160,-10){\includegraphics[width=45mm]{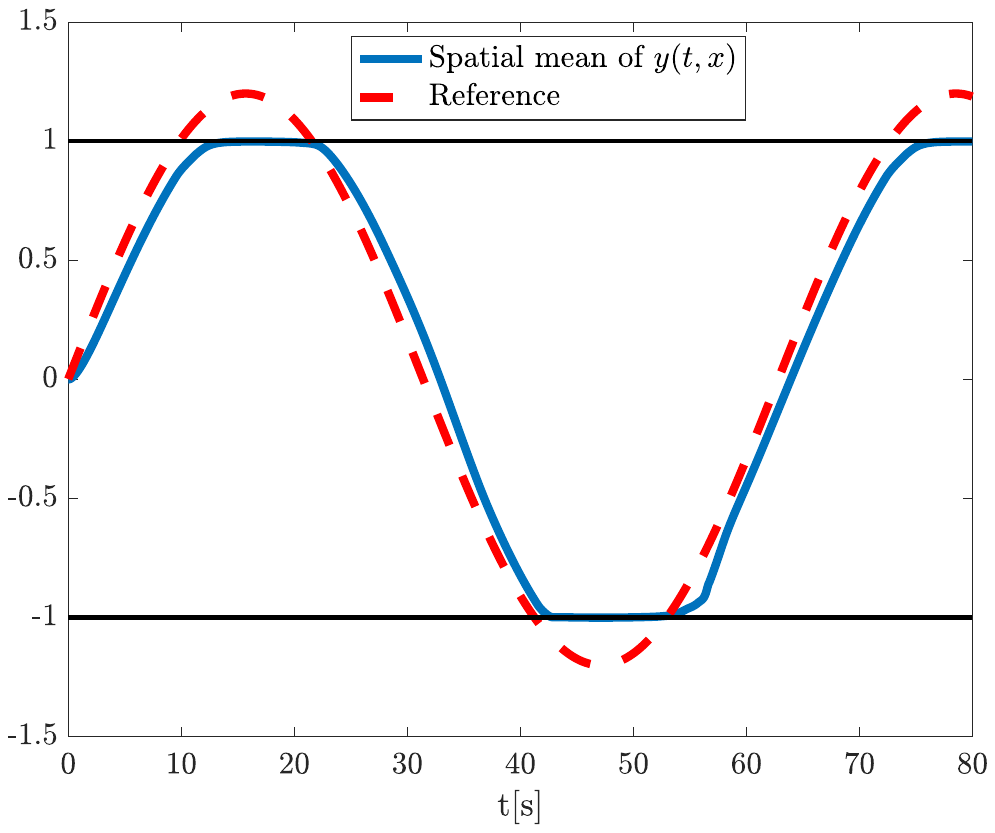}}
\put(300,-10){\includegraphics[width=45mm]{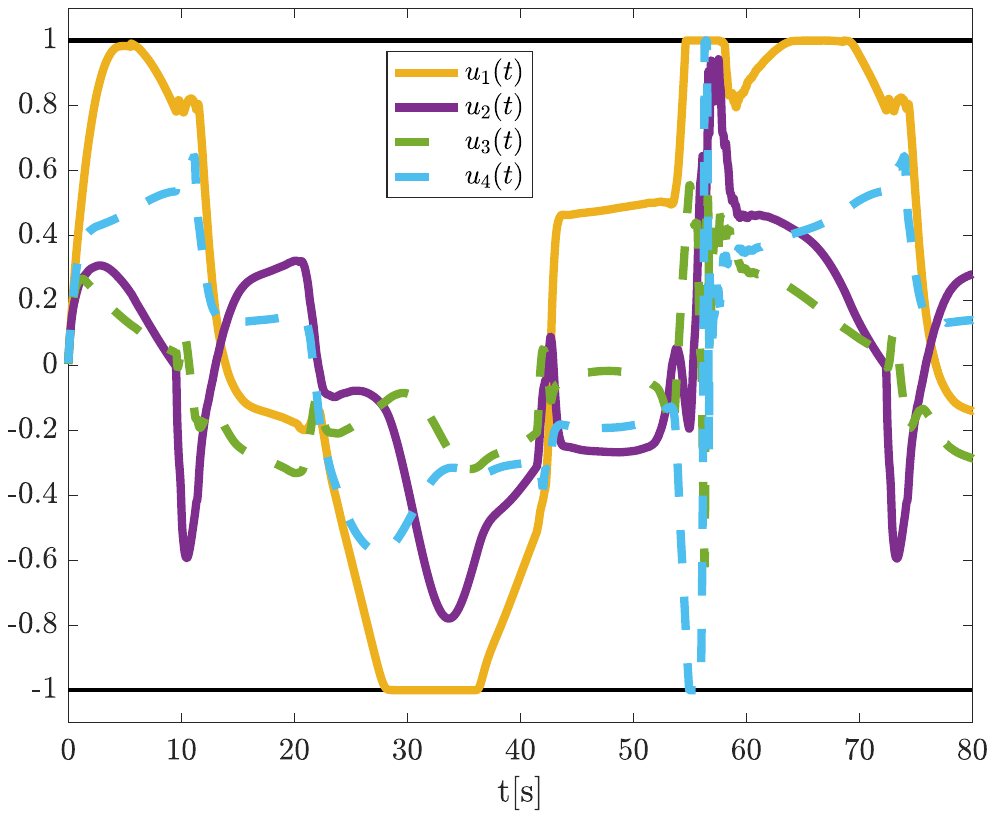}}
\end{picture}
\caption{Closed-loop simulation of the nonlinear KdV system with the dynamics-relaxed \textit{Koopman-BoxQP} controller tracking a time-varying spatial profile reference. Left: time evolution of the spatial profile $y(t,x)$ and the state constraints $[-1,1]$. Middle: spatial mean of the $y(t,x)$ and the state constraints $[-1,1]$. Right: the four control inputs and the control input constraints $[-1,1]$.}
\label{fig_kdv}
\end{figure}

The number of iterations of the IPM-based Algorithm \ref{alg_IPM} with cold and warm starts is plotted in Fig.\  \ref{fig_WarmStart_ColdStart}, which demonstrates that Algorithm \ref{alg_IPM} effectively supports warm-start initialization, reducing the average number of iterations by half, from $\mathbf{11.2039}$ (cold start) to $\mathbf{6.4745}$ (warm start). This implies that solving around $\mathbf{6}$ small-scale linear systems (reduced from large-scale via exploiting the problem structure) can obtain $10^{-6}$-accuracy solutions, \textit{explaining why Algorithm~\ref{alg_IPM} can solve large-scale NMPC problems at kHz rates (in milliseconds)} as shown in Table \ref{tab_execution_time}. Other state-of-the-art QP solvers, including MATLAB’s Quadprog (using IPM), OSQP \citep{stellato2020osqp}, and SCS \citep{odonoghue:21} (all choose the `eps\_rel', `1e-6' for a fair comparison), are compared for solving the same resulting \textit{Koopman-BoxQP} \eqref{eqn_KoopmanMPC_BoxQP}. Their average execution times under cold- and warm-start initialization are listed in Table \ref{tab_execution_time}, which shows that using a warm start is not helpful in the IPM-based solvers Quadprog and SCS, but beneficial in Algorithm \ref{alg_IPM}. Furthermore, Algorithm \ref{alg_IPM} is the only solver that achieves kHz-level solution speeds for large-scale NMPC problems. 
Moreover, the proposed Algorithm \ref{alg_IPM} has been extended to general QPs, and more numerical experiments can be found in the Appendix \ref{sec_appendix} due to space limits.

\section{Conclusion}
This article presents a fast \textit{Koopman-BoxQP} framework enabling NMPC at kHz rates. Our approach consists of a \textit{dynamics-relaxed} BoxQP formulation and a \textit{linear-algebra-tailored} and \textit{warm-starting-supported} IPM-based QP solver. We validated the approach on a challenging large-sized PDE-MPC problem (with 1040 decision variables and 2080 inequality constraints) and achieved a real-time solution under $<1$ ms on a standard desktop CPU, opening a new era of kHz-rate solutions for large-scale NMPC. Future work will focus on developing a GPU-accelerated \textit{Koopman-BoxQP} solver, leveraging the computational power of cuBLAS and cuSOLVER for large-scale linear system solves, together with the practical scalable iteration complexity (typically around $10$ iterations) of the IPM in BoxQPs, to enable larger PDE-MPC applications. The stability proof and robustness of the \textit{dynamics-relaxed Koopman-BoxQP} will also be investigated.

\acks{
 This research was supported by the Ralph O’Connor Sustainable Energy Institute at Johns Hopkins University. Wallace Tan was supported by the MathWorks Fellowship. Richard Braatz was supported by the U.S. Food and Drug Administration under the FDA BAA-22-00123 program (Award 75F40122C00200).}

\bibliography{l4dc2026-sample}

\appendix
\section{Numerical experiments on general QPs}\label{sec_appendix}
The proposed BoxQP solver in the dynamics-relaxed KoopmanMPC-to-BoxQP approach can be extended to general QPs, not limited to nonlinear MPC applications. 
Consider a general convex QP,
\begin{subequations}\label{eqn_bound_QP}
    \begin{align}
        \min_x&~ \tfrac{1}{2}x^\top Q x + q^\top x\\
        \text{s.t. }&~ y_{\min}\leq Ax\leq y_{\max}\label{eqn_bound_soft}  \\
        &~ x_{\min} \leq x \leq x_{\max} \label{eqn_bound_hard}   
    \end{align}
\end{subequations}
where $x\in\rr^{n_x}$, $Q=Q^\top\succeq0$ and $Q\in\rr^{n_x\times n_x}$, $q\in\rr^{n_x}$, and $A\in\rr^{n_y \times n_x}$. The bound constraint \eqref{eqn_bound_hard} often comes from the physical actuator limits, and  \eqref{eqn_bound_soft} comes from user-specified constraints, such as the states/outputs constraints. To ensure feasibility without violating hard actuator limits, the core idea of this article is the use of the soft-constrained formulation as follows,
\begin{equation}\label{eqn_soft_BoxQP}
    \begin{aligned}
        \min_{x,y}&~ \tfrac{1}{2}x^\top Q x + q^\top x + \tfrac{\rho}{2}\|Ax-y\|_2^2=\frac{1}{2}\!\left[\begin{array}{@{}c@{}}
             x  \\
             y 
        \end{array}\right]^\top\!\left[\begin{array}{@{}cc@{}}
            Q+\rho A^\top A & -\rho A^\top \\
            -\rho A & \rho I
        \end{array}\right]\left[\begin{array}{@{}c@{}}
             x  \\
             y 
        \end{array}\right]+\left[\begin{array}{@{}c@{}}
             x  \\
             y 
        \end{array}\right]^\top\!\left[\begin{array}{@{}c@{}}
             q  \\
             0
        \end{array}\right]\\
        \text{s.t.}&~ \left[\begin{array}{@{}c@{}}
             x_{\min}  \\
             y_{\min} 
        \end{array}\right]  \leq \left[\begin{array}{@{}c@{}}
             x  \\
             y 
        \end{array}\right] \leq \left[\begin{array}{@{}c@{}}
             x_{\max}  \\
             y_{\max} 
        \end{array}\right]
    \end{aligned}
\end{equation}
where $\rho$ is a large penalty parameter, such as $\rho=10^6$. The soft-constrained BoxQP \eqref{eqn_soft_BoxQP} is always feasible, Lipschitz-guaranteed, and supports the lightening fast tailored solver (from two reasons: \textit{i)} only requiring roughly 10 iterations for a high-accuracy solution, \textit{ii)} allowing the dimension reduction of linear systems) especially when $n_y\gg n_x$, such as from PDE-MPC applications. Algorithm \ref{alg_IPM} is compared against OSQP and SCS solvers (all choose the `eps\_rel', `1e-6') on random QPs \eqref{eqn_bound_QP}, and Fig.\  \ref{fig_comparision} shows that Algorithm \ref{alg_IPM} is fastest 
(at least an order of magnitude faster) and only requires 
roughly $10^{-2}$ s on QPs with $8400$ inequalities.

\begin{figure}[!htbp]
\begin{picture}(140,180)
\put(0,0){\includegraphics[width=75mm]{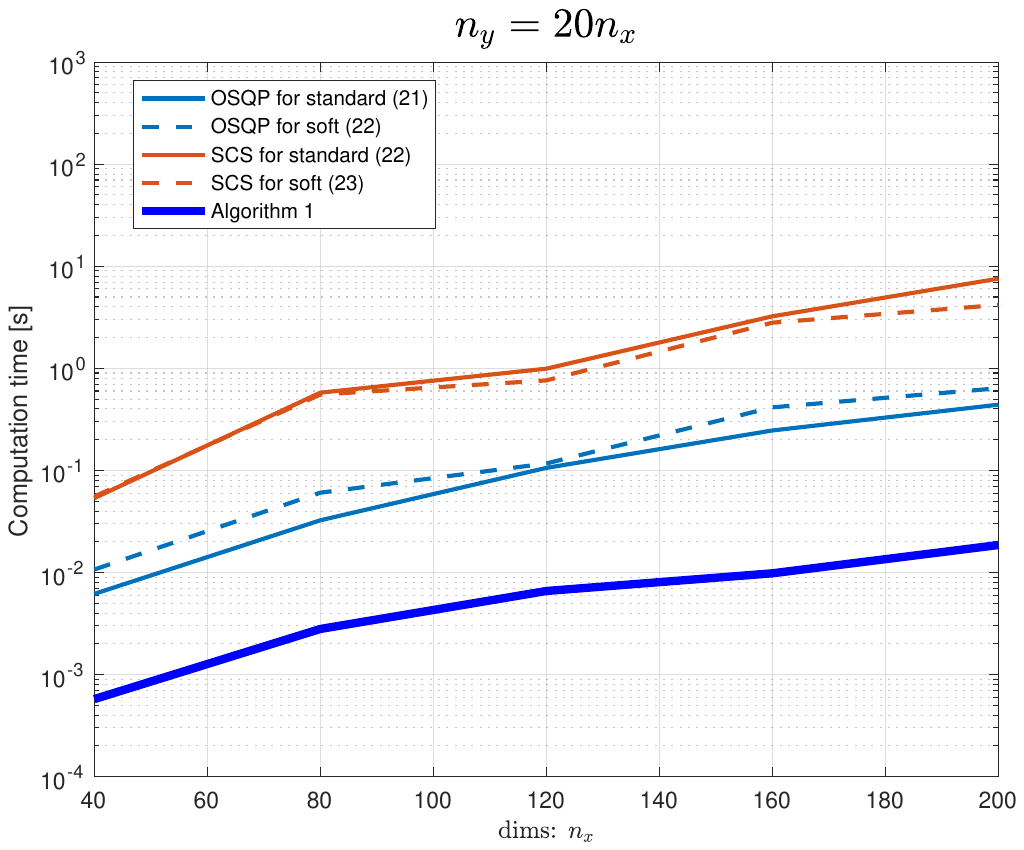}}
\put(225,0){\includegraphics[width=75mm]{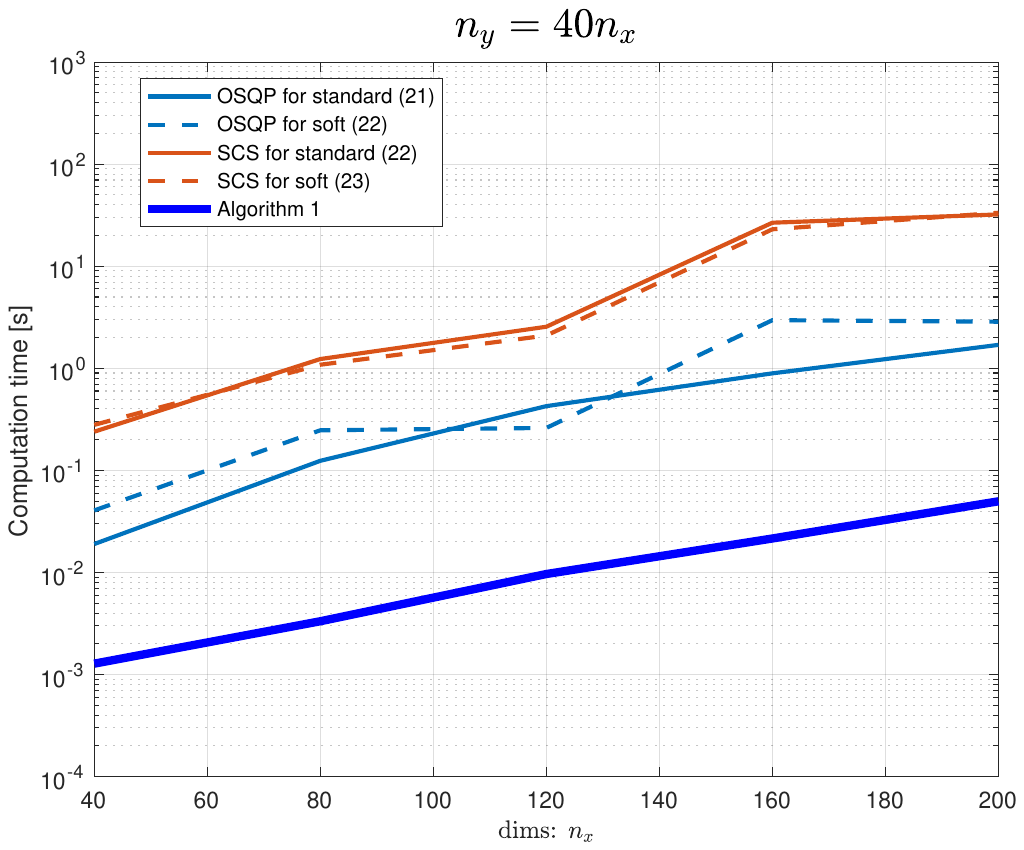}}
\end{picture}

\vspace{-0.3cm}

\caption{Execution time comparison between Algorithm \ref{alg_IPM}, OSQP, and SCS solvers on random QPs (and their soft-constrained QPs with $\rho=10^6$) with different dimensions. Left: $n_y=20 n_x$. Right: $n_y=40 n_x$.}
\label{fig_comparision}
\end{figure}

\end{document}